\definecolor{MyDarkBlue}{rgb}{0,0.29,0.7}
\newtheoremstyle{plain}
  {10pt}
  {10pt}
  {\it}
  {0pt}
  {\bf}
  {}
  {\newline}
  {}
\newtheoremstyle{definition}
  {10pt}
  {10pt}
  {}
  {0pt}
  {\bf}
  {}
  {\newline}
  {}
\theoremstyle{plain}
\newtheorem{theorem}{THEOREM}[section]
\newtheorem*{theorem*}{Theorem}
\newtheorem*{lemma*}{Lemma}
\theoremstyle{definition}
\newtheorem{example}[theorem]{\textit{Example}}
\newtheorem{application}[theorem]{\textit{Application}}
\newtheorem{remark}[theorem]{\textit{Remark}}
\numberwithin{equation}{section}
\title[On Absolute and Relative Change]{On Absolute and Relative Change}
\author[S.~Brauen, P. Erpf \& M.~Wasem]{Silvan Brauen, Philipp Erpf and Micha Wasem}
\date{\today}
\begin{document}\maketitle
\normalsize
\begin{abstract}
Based on an axiomatic approach we propose two related novel one-para\-meter families of indicators of change which put in a relation classical indicators of change such as absolute change, relative change and the log-ratio.
\end{abstract}
\vspace{1cm}
\emph{Keywords: }Absolute Change, Relative Change, Indicator of Change, Log-ratio\\
\emph{2010 Mathematics Subject Classification: }26B99, 62C05\\
\emph{JEL Classification: }C02\\

\section{Introduction}
One of the most basic concept in statistics is change (take e.g.\ the change of a quantity in time). Usually, change is either expressed in absolute or relative terms. As we will show below, already the interpretation of such most basic indicators might be difficult: For a fixed choice of a measurement unit, absolute and relative change might be limited in comparing changes at different scales (see Example \ref{example1} below) and it could be desirable to find a suitable indicator of change which combines the information obtained from absolute and relative change. Based on a relaxation of an axiomatic characterization of relative change introduced by \cite[p. 43]{vartia} (see Section \ref{lambdasection}), we will introduce a one-parameter family $f_\lambda$ of indicators of change which is particularly well-behaved with respect to a change of measurement units and which includes absolute and relative change as special cases.\\

In another direction it is revealed that relative change has certain conceptual deficiencies like its lack of additivity and antisymmetry (see \cite{vartia}, \cite{wetherell} and \cite{schurer}), which can be remedied by replacing it with a log-ratio which seems to appear for the first time in \cite{mcalister} (see also \cite{aitchison1} and \cite{aitchison2}). We will introduce a new antisymmetric and additive one-parameter family of indicators $F_\lambda$ which interpolates between absolute change and the log-ratio and we will show how $f_\lambda$ and $F_\lambda$ are interlinked thus providing a general framework for absolute change, relative change and the log-ratio.\\

As applications we show how $f_\lambda$ and $F_\lambda$ allow to meaningfully compare changes at different scales (see Example \ref{example1}) and how they uncover a simple relationship between marginal functions and economic elasticity (see Example \ref{elasticity}).\\

Throughout the article, will denote the set of (strictly) positive real numbers by $\mathbb R_+$, absolute change by $\operatorname{abs}(x,y):=y-x$ and relative change by $\operatorname {rel}(x,y):=(y-x)/x$, ($x\ne 0$).
 
\section{Exhibition of the Problem}
We will now illustrate possible limitations of absolute and relative change when analyzed alone.
\begin{example}\label{example1}
Take the example case of a company selling a good through five different sales channels I to V. The amount of units sold at a past time (past value) is denoted by $x$ and the one at present time (present value) by $y$. Suppose the company measures the following numbers:
\begin{center}
\footnotesize
\begin{tabular}{|l|c|c|c|c|} \hline
  Channel & Past Value $x$ & Present Value $y$ & $\mathrm{abs}(x,y)$ & $\mathrm{rel}(x,y)$\\
I&10&20 & 10 & 100\%\\
II&500 & 570 & 70 & 14\%\\
III&140 & 210 & 70 & 50\%\\
IV&35 & 70 & 35 & 100\%\\
V& 80 & 135 & 55 & 68.75\%\\
\hline \end{tabular}\end{center}
\normalsize
\begin{enumerate}
\item Considering absolute change, it is impossible to distinguish between II and III although III is better than II in relative terms. In absolute terms, the channels I and IV have rather low growth while their relative growth is comparably large in contrast to II and III with a high absolute growth even if their relative growth is comparably small. 
\item Considering relative change, I and IV perform equally well, although IV has a higher absolute change.
\item Neither absolute nor relative change allow to meaningfully compare changes at different scales (i.e.\ is I better than II? Or: which of the sales channels is best?) How could one compare channel V to the other channels?
\end{enumerate}

The indicator $f_\lambda$ we construct below depends on a number $\lambda\in\mathbb R$ and -- as we will show below -- is an interpolation of absolute and relative change provided $\lambda\in[0,1]$. In the present example -- since we would like to combine information from absolute and relative change -- we will choose $\lambda=\frac12$ (see Remark \ref{properties} (2)). In this case the indicator is given by $f_{1/2}:\mathbb R_+^2\to \mathbb R$, $
f_{1/2}(x,y)=(y-x)/\sqrt{x}.$
The values of $f_{1/2}$ for all sales channels are listed below:

\begin{center}
\footnotesize
\begin{tabular}{|l|c|c|c|} \hline
  Sales Channel & Past Value $x$ & Present Value $y$ & $f_{1/2}(x,y)$\\
I&10&20 & 3.16 \\
II&500 & 570 & 3.13 \\
III&140 & 210 & 5.92\\
IV&35 & 70 & 5.92 \\
V & 80 & 135 & 6.15\\
\hline \end{tabular}\end{center}
\normalsize

The upshot is that all the values become comparable using a single number (III and IV are equally good) and V is rated best without being absolutely nor relatively best.
\end{example}

\section{Construction of $f_\lambda$}\label{lambdasection}
In \cite[p. 43]{vartia}, an indicator of relative change is characterized by a function $r:\mathbb R^2_+\to\mathbb R$ satisfying
\begin{enumerate}
\item $r(x,y) = 0 \Longleftrightarrow x = y$,
\item $r(x,y) \lessgtr 0 \Longleftrightarrow x \lessgtr y$,
\item $r$ is continuous and $y\mapsto r(x,y)$ is increasing.
\item $r(Cx,Cy)=r(x,y)$ for all $C>0$.
\end{enumerate}
Since we seek for an indicator of change that interpolates between absolute and relative change, we will relax axiom (4) since it excludes $r(x,y)=\mathrm{abs}(x,y)$. Our idea is to replace the scaling invariance (4) by a relaxed \emph{relative scaling invariance} (see below). Furthermore, since $y\mapsto \mathrm{abs}(x,y)$ and $y\mapsto \mathrm{rel}(x,y)$ are affine linear, we include affine linearity as an axiom too. As we will show in the sequel, our set of axioms will determine a family of indicators which is unique up to a positive multiplicative constant.
\subsection{Axioms}
\begin{enumerate}
\item \textbf{Affine Linearity in the Second Argument.} A map $f:\mathbb R_+^2\to\mathbb R$ is said to be \emph{affine linear} in the second argument if $f(x,y)=m(x)y+b(x),$ where $m$ and $b$ are values which may still depend on $x$.
\item \textbf{Naturality.} We call $f$ natural, if $f$ is continuous and moreover $f(x,y)\lessgtr 0$ if $x\gtrless y$ and $f(x,x)=0$ for all $x>0$ (i.e.\ $f$ assigns a positive (negative) number to growth (decrease) and zero to stagnation).
\item \textbf{Relative Scaling Invariance. }$f$ is relative scaling invariant, if it behaves relatively invariant under a change of scale, this is, for all $C>0$ and all pairs $(x,y),(\bar x,\bar y)\in\mathbb R_+^2$ the equation
\begin{equation}\label{scale}
f(x,y) \cdot f(C\bar x,C\bar y)=f(\bar x,\bar y)\cdot f(Cx,Cy) 
\end{equation}
should hold true. This means that the measurement units in which $x$ and $y$ are measured do not matter in a relative sense.\end{enumerate}

\subsection{Construction} We have the following theorem:
\begin{theorem}
If a map $f:\mathbb R^2_+\to \mathbb R$ satisfies the axioms (1)-(3), then it holds that
$$
f(x,y)=\mathcal C\cdot\frac{y-x}{x^\lambda}$$
for some $\mathcal C>0$ and $\lambda\in\mathbb R$.
\end{theorem}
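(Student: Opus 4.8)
\emph{Proof sketch.} The plan is to work in three stages: first use axioms (1) and (2) to reduce $f$ to the form $f(x,y)=m(x)(y-x)$ with $m$ positive and continuous; then feed this into the relative scaling invariance \eqref{scale} to extract a multiplicative functional equation for $m$; and finally invoke the classical solution of the multiplicative Cauchy equation.

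First I would combine affine linearity with naturality. Writing $f(x,y)=m(x)y+b(x)$ and imposing $f(x,x)=0$ forces $b(x)=-x\,m(x)$, so $f(x,y)=m(x)(y-x)$. Evaluating the sign condition of naturality at, say, $y=x+1>x$ gives $m(x)>0$ for every $x>0$; and since $m(x)=f(x,x+1)$ is a composition of the continuous map $x\mapsto(x,x+1)$ with the continuous function $f$, the factor $m$ is continuous.

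Next I would substitute $f(x,y)=m(x)(y-x)$ into \eqref{scale}. Using $f(Cx,Cy)=C\,m(Cx)(y-x)$, the identity becomes $m(x)(y-x)\cdot C\,m(C\bar x)(\bar y-\bar x)=m(\bar x)(\bar y-\bar x)\cdot C\,m(Cx)(y-x)$. Choosing any pair with $y\ne x$ and any pair with $\bar y\ne\bar x$ (the remaining cases hold trivially because both sides vanish), I cancel the nonzero factor $C(y-x)(\bar y-\bar x)$ to obtain $m(x)\,m(C\bar x)=m(\bar x)\,m(Cx)$ for all $x,\bar x,C>0$; equivalently, the ratio $m(Cx)/m(x)$ is independent of $x$. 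Setting $x=1$ identifies it as $m(C)/m(1)$, so $h:=m/m(1)$ satisfies $h(Cx)=h(C)h(x)$ with $h$ positive and continuous and $h(1)=1$.

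The last step is the substantive one but standard: the continuous positive solutions of $h(Cx)=h(C)h(x)$ on $\mathbb R_+$ are exactly the power functions $h(x)=x^{-\lambda}$, $\lambda\in\mathbb R$ (via $u=\log x$ this reduces to the additive Cauchy equation for $\log\circ h\circ\exp$, whose continuous solutions are linear). Hence $m(x)=\mathcal C\,x^{-\lambda}$ with $\mathcal C:=m(1)>0$, giving $f(x,y)=\mathcal C\,(y-x)/x^\lambda$. I expect the only delicate points to be purely organizational: checking that the cancellation in the scaling identity is legitimate (covered by the trivial vanishing cases) and that continuity of $m$ is secured before appealing to the Cauchy equation.
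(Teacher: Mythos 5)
Your proof is correct and follows the same overall strategy as the paper's: reduce $f$ to $m(x)(y-x)$ via axioms (1) and (2), then pin down $m$ through a Cauchy-type functional equation. The middle step differs, though. The paper first extracts from \eqref{scale} an auxiliary relation $f(Cx,Cy)=g(C)f(x,y)$, solves the power Cauchy equation to get $g(C)=C^\mu$, deduces that $m$ is homogeneous of degree $\mu-1$, and then invokes Euler's homogeneous function theorem to obtain an ODE for $m$. You instead cancel the factors $(y-x)(\bar y-\bar x)$ directly in \eqref{scale} to get $m(x)\,m(C\bar x)=m(\bar x)\,m(Cx)$ and reduce to the multiplicative Cauchy equation for $h=m/m(1)$. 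Your route is more economical and, in fact, more careful on two points the paper glosses over: you establish continuity and positivity of $m$ explicitly (via $m(x)=f(x,x+1)$) before appealing to the regularity hypothesis needed for the Cauchy equation, and you avoid Euler's theorem, whose ODE formulation presupposes differentiability of $m$ --- a hypothesis not contained in the axioms. (The paper's own homogeneity relation $m(Cx)=C^{\mu-1}m(x)$ would already yield $m(C)=C^{\mu-1}m(1)$ upon setting $x=1$, so the ODE detour is dispensable there as well.) Both arguments reach the same conclusion; yours is the tighter write-up.
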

\begin{proof}
The affine linearity condition together with the zero assignment for stagnation implies that $f(x,x)=m(x)x+b(x)=0,$ which forces $b(x)=-m(x)x$. Hence, the general form of $f$ will be $f(x,y)=m(x)(y-x).$
Since \eqref{scale} holds for all $C>0$ and all admissible pairs $(x,y)$ and $(\bar x,\bar y)$, it follows that
\begin{equation}\label{scaling2}
f(Cx,Cy)=g(C)f(x,y)\end{equation}
for some continuous function $g:\mathbb R_+\to\mathbb R$. Hence (see \cite[p. 163]{efthimiou}) $g(C)$ is a solution to the \emph{Power Cauchy Equation} which has the only continuous solutions $
g(C)=C^\mu,~\mu\in\mathbb R$. Replacing this solution in equation \eqref{scaling2}, we deduce that $
m(Cx)= C^{\mu-1}m(x)$, which implies that $m$ is a homogeneous function of degree $\mu-1$. According to \emph{Euler's Homogeneous Function Theorem}, $m$ satisfies the ordinary differential equation
$
m'(x)-\frac{\mu-1}{x}m(x)=0
$
with the solution $m(x)=\mathcal C x^{-\lambda}$ for some constant $\mathcal C$ and $\lambda := 1-\mu$. Putting everything together we obtain
$$
f(x,y)=\mathcal C\cdot\frac{y-x}{x^\lambda}.
$$
and the sign of the constant, $\mathcal C>0$, is determined by the naturality assumption. This finishes the proof.
\end{proof}
We will henceforth set $\mathcal C=1$ (see Remark \ref{properties} (2) below) and define
$$
f_\lambda(x,y):=\frac{y-x}{x^\lambda}.
$$

\begin{remark}\label{properties}
We now list a few properties of $f_\lambda$:
\begin{enumerate}
\item\textbf{Formal Structure of a Cobb-Douglas Function. }Observe that we might write $
f_\lambda (x,y) = \mathrm{rel}(x,y)^\lambda \cdot \mathrm{abs}(x,y)^{1-\lambda}$ and therefore interpret $f_\lambda$ formally as a Cobb-Douglas function
$Y(L,K)=A L^\beta K^\alpha$
with constant returns to scale, where $A=1$, $L=\mathrm{rel}(x,y)$, $K=\mathrm{abs}(x,y)$, $\alpha = 1-\lambda$ and $\beta=\lambda$.
\item \textbf{Generalizing Absolute/Relative Change and Calibration. }The normalization choice $\mathcal C=1$ is justified by the observation that
$$f_0(x,y)=\mathrm{abs}(x,y)\text{ and }f_{1}(x,y)=\mathrm{rel}(x,y).$$
This indicates in how far $f_\lambda$ is a generalization of absolute and relative change. The number $\lambda$ allows to intentionally weigh between the two and therefore serves as a calibration. If one assigns the same value to two pairs $(x,y)$ and $(\bar x,\bar y)$ such that $x\ne y$, $\bar x\ne \bar y$ and $x\ne\bar x$, the value of $\lambda$ can be determined using the formula$$
\lambda = \frac{\ln\left(\frac{\bar y - \bar x}{y-x}\right)}{\ln\left(\frac{\bar x}{x}\right)}.
$$
A natural choice for being in the middle between absolute and relative is $\lambda=\frac12$, the value chosen in Example \ref{example1}.
\item \textbf{Relative Scaling Invariance.} If the unit in which $x$ and $y$ are measured is $\mathrm u$, then the unit of $f_\lambda(x,y)$ is $\mathrm u^{1-\lambda}$. In this way $f_\lambda$ cannot directly be interpreted but it useful in comparing different pairs $(x,y)$ and $(\bar x,\bar y)$ since the unit-free quotient $
{f_{\lambda}(\bar x,\bar y)}/{f_{\lambda}(x,y)}$ will be independent of the choice of the unit $\mathrm u$ according to the relative scaling invariance \eqref{scale}. 
\item \textbf{From Differences to Quantities.} Using the structure of the Cobb-Douglas function above and replacing the differences $\mathrm{abs}(x,y)$ and $\mathrm{rel}(x,y)$ by (absolute and relative) quantities, one obtains a generalization of absolute and relative quantities: For an absolute quantity $y\geqslant 0$ relative to another quantity $x>0$ one would obtain the function
$$
(x,y)\mapsto\left(\frac{y}{x}\right)^\lambda y^{1-\lambda} = \frac{y}{x^\lambda},
$$
which recovers the absolute quantity provided $\lambda = 0$ and the relative one if $\lambda = 1$. This function is linear in the second argument and it satisfies the relative scaling invariance (axiom (3) above).
\end{enumerate}
\end{remark}



\section{An Antisymmetric, Additive Variant of $f_\lambda$}
Common criticism on relative change includes its failure of \emph{antisymmetry}, i.e.\ generally $\mathrm{rel}(x,y)\ne - \mathrm{rel}(y,x)$ and its failure of \emph{additivity} i.e.\ generally $$\mathrm{rel}(x,y)+\mathrm{rel}(y,z)\ne\mathrm{rel}(x,z)$$ (see \cite{vartia}, \cite{wetherell} and \cite{schurer}). 
It is observed in \cite{vartia}, that the log-ratio $r(x,y)=\ln(y/x)$ is the \emph{unique} indicator of relative change that is antisymmetric, additive and \emph{normed}, i.e.\ the linearization of $y\mapsto r(x,y)$ around $x$ is given by $\mathrm{rel}(x,y)$. It is therefore natural to ask if an indicator $F_\lambda:\mathbb R_+^2\to\mathbb R$ exists which is antisymmetric, additive and normed in the sense that the linearization of $y\mapsto F_\lambda(x,y)$ around $x$ is given by $f_\lambda(x,y)$. The following theorem answers this question affirmatively:
\begin{theorem}\label{Flambda}
The indicator
$$F_\lambda(x,y)=\begin{cases}\displaystyle \frac{y^{1-\lambda}-x^{1-\lambda}}{1-\lambda},&\text{if }\lambda \ne 1\\
\hfill \ln(y/x),&\text{if }\lambda = 1\end{cases}
$$
is the unique antisymmetric and additive indicator such that the linearization of $y\mapsto F_\lambda(x,y)$ around $x$ is given by $f_\lambda(x,y)$.
\end{theorem}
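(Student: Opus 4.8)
The plan is to prove the two halves of the statement more or less independently: first check that the displayed $F_\lambda$ has the three asserted properties (a direct computation), and then show it is the only indicator with those properties, the uniqueness resting on the standard ``cocycle'' consequence of additivity together with the ODE forced by the normalization.

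First I would verify the properties of the given $F_\lambda$. Antisymmetry and additivity are immediate: for $\lambda\ne 1$ both follow by telescoping the expression $(y^{1-\lambda}-x^{1-\lambda})/(1-\lambda)$, and for $\lambda=1$ they reduce to the familiar identities $\ln(x/y)=-\ln(y/x)$ and $\ln(y/x)+\ln(z/y)=\ln(z/x)$. For the normalization I would note that the linearization of $y\mapsto F_\lambda(x,y)$ around $x$ is the affine map $y\mapsto F_\lambda(x,x)+\partial_2 F_\lambda(x,x)\,(y-x)$; since $F_\lambda(x,x)=0$ and $\partial_2 F_\lambda(x,y)=y^{-\lambda}$ for $\lambda\ne 1$ (respectively $1/y$ for $\lambda=1$), evaluating at $y=x$ gives the slope $x^{-\lambda}$, so the linearization equals $y\mapsto(y-x)/x^{\lambda}=f_\lambda(x,y)$, as required.

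For uniqueness, suppose $G:\mathbb R_+^2\to\mathbb R$ is additive and its linearization around $x$ is $f_\lambda(x,\cdot)$ for every $x>0$; in particular $G(x,x)=0$ and $y\mapsto G(x,y)$ is differentiable at $y=x$ with $\partial_2 G(x,x)=x^{-\lambda}$. (Antisymmetry is then automatic, since additivity with $G(x,x)=0$ gives $G(x,y)+G(y,x)=G(x,x)=0$, so it need not be invoked.) Fixing the base point $1\in\mathbb R_+$ and setting $\phi(x):=G(1,x)$, additivity $G(1,x)+G(x,y)=G(1,y)$ yields the representation $G(x,y)=\phi(y)-\phi(x)$ for all $x,y>0$. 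Consequently $\phi(y)-\phi(x)=G(x,y)$ is differentiable in $y$ at $y=x$ for every $x$, so $\phi$ is differentiable on $\mathbb R_+$ with $\phi'(x)=\partial_2 G(x,x)=x^{-\lambda}$. Integrating this ODE gives $\phi(x)=x^{1-\lambda}/(1-\lambda)+c$ when $\lambda\ne 1$ and $\phi(x)=\ln x+c$ when $\lambda=1$; in either case the constant $c$ cancels in $\phi(y)-\phi(x)$, so $G=F_\lambda$.

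The only genuinely delicate point is the regularity bookkeeping: the representation $G(x,y)=\phi(y)-\phi(x)$ is purely algebraic and requires no continuity assumption, but to run the ODE step one must use that the hypothesis ``the linearization of $y\mapsto G(x,y)$ around $x$ is $f_\lambda(x,\cdot)$'' already encodes differentiability of each diagonal slice at $y=x$ — which is exactly what transfers to differentiability of $\phi$ on all of $\mathbb R_+$. Beyond stating this hypothesis precisely, I expect no real obstacle.
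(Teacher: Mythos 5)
Your proposal is correct, and it reaches the same structural conclusion as the paper --- additivity forces $F_\lambda$ to be additively separable, and the normalization pins down the diagonal derivative $\partial_y F_\lambda(x,x)=x^{-\lambda}$ --- but by a slightly different and technically cleaner route. The paper obtains separability by differentiating the additivity identity $F_\lambda(x_1,x_2)+F_\lambda(x_2,t)=F_\lambda(x_1,t)$ with respect to $t$, concluding that $\partial_y F_\lambda(x,t)$ is independent of $x$, and then writes $F_\lambda(x,y)=\int_x^y \partial_yF_\lambda(t,t)\,\mathrm dt$; this implicitly assumes that $y\mapsto F_\lambda(x,y)$ is differentiable (with integrable derivative) \emph{everywhere}, whereas the stated hypothesis only provides a linearization at $y=x$. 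Your cocycle decomposition $G(x,y)=\phi(y)-\phi(x)$ with $\phi(x)=G(1,x)$ extracts separability purely algebraically from additivity, and then the linearization hypothesis at the single point $y=x$ is exactly what makes $\phi$ differentiable at $x$ with $\phi'(x)=x^{-\lambda}$, after which the mean value theorem finishes the job. Your version therefore proves the theorem under precisely the stated hypotheses, and your observations that antisymmetry is redundant (it follows from additivity together with $G(x,x)=0$) and that the existence half deserves explicit verification are both correct and go beyond what the paper records. The paper's $\partial_yF_\lambda(t,t)$ is of course your $\phi'(t)$, so the two arguments are two presentations of the same idea; yours buys a weaker regularity requirement, the paper's is marginally shorter.
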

\begin{proof}
We start by deducing a certain structure of $F_\lambda$ from differentiating the additivity property
$
F_\lambda(x_1,x_2)+F_\lambda(x_2,t) = F_\lambda(x_1,t)\quad \forall x_1,x_2,t\in \mathbb R_+
$
with respect to $t$. Hence $
\partial_yF_\lambda(x_2,t)=\partial_yF_\lambda(x_1,t)$ for all $x_1,x_2,t\in\mathbb R_+$ which implies that $\partial_yF_\lambda$ only depends on $t$ and in particular it holds that 
\begin{equation}\label{independence}
\partial_yF_\lambda(x,t) = \partial_yF_\lambda(t,t)
\end{equation}
for all $(x,t)\in\mathbb R_+^2$. It follows that $F_\lambda(x,y)$ is \emph{additively separable} (it decompoases into a sum of two single variable functions depending on $x$ and $y$ respectively). The linearization of $y\mapsto F_\lambda(x,y)$ around $x$ should equal $f_\lambda$, so we obtain
$$
F_\lambda(x,x)+\partial_yF_\lambda(x,x)(y-x)=\frac{y-x}{x^\lambda}
$$
and since $F_\lambda(x,x)=0$ by antisymmetry of $F_\lambda$, it follows that
\begin{equation}\label{derivative}
\partial_yF_\lambda(x,x) = \frac{1}{x^\lambda}
\end{equation}
and hence
$$\begin{aligned}F_\lambda(x,y) & =\int_{x}^y \partial_yF_\lambda(x,t)\,\mathrm dt \stackrel{\eqref{independence}}{=} \int_{x}^y \partial_yF_\lambda(t,t)\,\mathrm dt \stackrel{\eqref{derivative}}{=} \int_{x}^y \frac{1}{t^\lambda}\,\mathrm dt\\
& =\begin{cases}\displaystyle \frac{y^{1-\lambda}-x^{1-\lambda}}{1-\lambda},&\text{if }\lambda \ne 1\\
\hfill \ln(y/x),&\text{if }\lambda = 1.\end{cases}
\end{aligned}$$
It is readily checked that $F_\lambda(x,y)+F_\lambda(y,z)=F_\lambda(x,z)~\forall x,y,z\in\mathbb R_+$.
\end{proof}

\begin{remark} We now list a few properties of $F_\lambda$:
\begin{enumerate}
\item Observe that $F_\lambda$ puts absolute change ($\lambda = 0$) and the log-ratio ($\lambda = 1$) into a natural relation since
$$
\lim_{\lambda \to 1}\frac{y^{1-\lambda}-x^{1-\lambda}}{1-\lambda}=\ln(y/x).
$$
Furthermore, $F_\lambda$ inherits the naturality and the relative scaling invariance from $f_\lambda$ (but not the affine linearity in the second argument).
\item Using a Taylor expansion of $y\mapsto F_\lambda(x,y)$ around $x$ we obtain\\

$\displaystyle
F_\lambda(x,y)=f_{\lambda}(x,y)+\sum_{k=2}^n\frac{(-1)^{k+1}\Gamma(\lambda+k-1)}{k!x^{k+\lambda-1}\Gamma(\lambda)}(y-x)^k+\mathcal O((x-y)^{n+1}),
$\\

\noindent
where $\Gamma$ denotes Euler's Gamma function and we note that the truncated series also satisfies the relative scaling invariance and the naturality property. Truncation after the linear term and and using the Lagrange form of the remainder, we obtain a quadratic global bound on the difference between $F_\lambda$ and $f_\lambda$.
$$|F_\lambda(x,y)-f_\lambda(x,y)|\leqslant\frac{\lambda\cdot (y-x)^2}{\min\{x,y\}^{1+\lambda}}.
$$
\item If $x=1$ (which corresponds to a choice of measurement units), then the function $y\mapsto F_\lambda(1,y)$ equals a Box-Cox transformation of parameter $1-\lambda$ (see \cite{boxcox}).\\
In this case we have $F_0(1,y)=y, F_{\frac{1}{5}}(1,y)=\tfrac{5}{4}(\sqrt[5]{y^4}-1), F_{\frac12}(1,y)=2(\sqrt y-1)$ and $F_1(1,y)=\ln y$ (see Figure \ref{graph}).\end{enumerate}
\begin{figure}[H]
\includegraphics[scale=0.8]{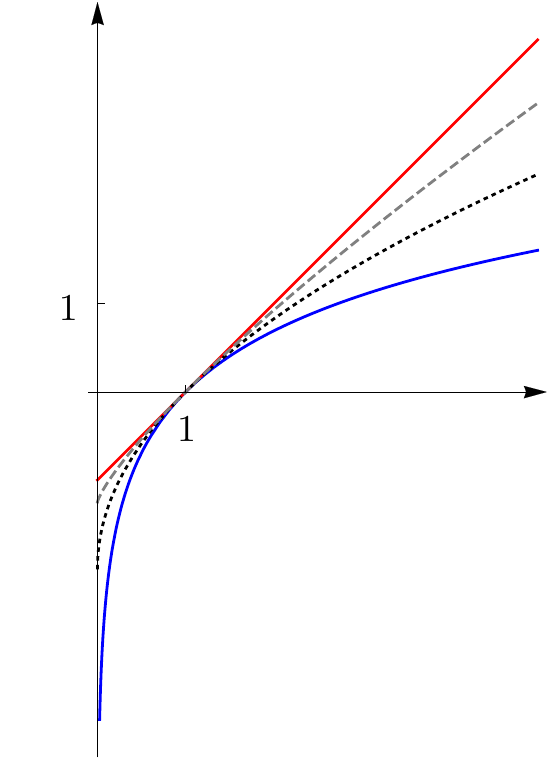}
\caption{The plot shows the graphs of $F_0(1,y)$, $F_1(1,y)$ (solid), $F_{\frac12}(1,y)$ (dotted) and  $F_{\frac15}(1,y)$ (dashed) for $y\in(0,5]$.}\label{graph}
\end{figure}
\end{remark}

\section{Application}
Every quantity which involves either absolute or relative changes can naturally be generalized using $f_\lambda$ (or $F_\lambda$):

\begin{application}[Relation between Marginal Functions and Elasticity]\label{elasticity}
The indicators $f_\lambda$ and $F_\lambda$ uncover a relationship between marginal functions and elasticity: For a differentiable economic function $g(x)$, the associated marginal function is given by its derivative $g'(x)$. The corresponding elasticity function is defined by
$$
\varepsilon_g(x)=\lim_{y\to x}\frac{(g(y)-g(x))/g(x)}{(y-x)/x}= g'(x)\cdot\frac{x}{g(x)},
$$
which is the limit of a quotient of relative changes. Replacing the relative changes in this definition by $f_\lambda$, we obtain a generalized elasticity function
$$
\varepsilon_{g}^\lambda(x)=\lim_{y\to x}{\frac{(g(y)-g(x))/g(x)^\lambda}{(y-x)/x^\lambda}}=g'(x)\cdot\left(\frac{x}{g(x)}\right)^\lambda
$$
which puts the marginal function ($\lambda=0$) and the classical elasticity ($\lambda=1$) into a natural relation.\\

\end{application}



\section{The Choice of $\lambda$}\label{choice}
The choice of $\lambda$ depends heavily on the context. If one wishes to interpolate ``symmetrically'' between absolute and relative change, the following conceptual evidence reveals $\lambda = \frac12$ as a good choice: In order to fix $\lambda$ we require that a scaling of the absolute change by a factor $C>0$ (for a fixed relative change) amounts to the same as the same scaling of the relative change (for a fixed absolute change). Observe that the pairs $(x,y)$ and $(Cx,Cy)$ satisfy $\operatorname{abs}(Cx,Cy)=C\operatorname{abs}(x,y)$ and $\operatorname{rel}(Cx,Cy)=\operatorname{rel}(x,y)$, hence the absolute change scales with $C$ whereas the relative change is preserved. The pairs $(x,y)$ and $\left(\frac{x}{C},y-x+\frac{x}{C}\right)$ satisfy

$$\begin{aligned}\operatorname{abs}\left(\frac{x}{C},y-x+\frac{x}{C}\right) & =\operatorname{abs}(x,y)\\
\operatorname{rel}\left(\frac{x}{C},y-x+\frac{x}{C}\right) & =C\operatorname{rel}(x,y),\end{aligned}$$
hence the relative change scales with $C$ whereas the absolute change is preserved. The parameter $\lambda$ will now be chosen such that $f_\lambda$ satisfies
$$
f_\lambda(Cx,Cy)=f_\lambda\left(\frac{x}{C},y-x+\frac{x}{C}\right).
$$
This equation reduces to $C^{1-\lambda} = C^{\lambda}$ and hence $\lambda=\frac{1}{2}$. We will add a concrete example to illustrate the choice. As a reference pair we choose $(1,2)$ which satisfies $f_\lambda(1,2) = 1$ for every choice of $\lambda$. The pair $(2,4)$ has an absolute change which is twice as big as the one of $(1,2)$ but $\operatorname{rel}(1,2)=\operatorname{rel}(2,4)$. The unique pair which has the same absolute change as $(1,2)$ but twice its relative change is given by $\left(\frac12,\frac32\right)$. Then $f_\lambda(2,4)=2^{1-\lambda}$, $f_\lambda\left(\frac12,\frac32\right)=2^\lambda$ and $2^{1-\lambda}=2^\lambda$ iff $\lambda=\frac12$. Hence $\lambda$ is chosen in such a way that doubling either the absolute or the relative change amounts to the same result.\\

In the perspective of $f$ as a Cobb-Douglas function we note that the choice $\lambda=\frac{1}{2}$ has the following consequence: The marginal rate of substitution at a point $(L,K)$ of a Cobb-Douglas function $Y(L,K)=A L^\beta K^\alpha$ is given by
\begin{equation}\label{marginalrate}
\frac{\partial_L Y(L,K)}{\partial_K Y(L,K)}=\frac{\beta L^{\beta-1}K^\alpha}{\alpha L^\beta  K^{\alpha-1}}=\frac{\beta}{\alpha}\cdot \frac{K}{L}.
\end{equation}
In our case, whenever $L=\mathrm{rel}(x,y)$, $K=\mathrm{abs}(x,y)$, $\alpha = 1-\lambda$ and $\beta=\lambda$, the latter expression in \eqref{marginalrate} equals $\lambda/(1-\lambda)\cdot x$. This quantity equals the past value $x$ exactly when $\lambda = \frac12$.

\section{Conclusion}
We have shown that the indicator of change $f_\lambda$ can be singled out from some simple and natural axioms. In this way, the basic concepts of absolute and relative change are identified as special cases of a more general quantity. Building upon this indicator, a new antisymmetric and additive indicator $F_\lambda$ is constructed, which relates absolute change to the log-ratio. Our analysis therefore contains the main result of \cite{vartia} as a special case.\\

Moreover, we have obtained a novel generalized elasticity function which uncovers a relationship between two classical concepts in economics -- marginal functions and elasticity.\\


\subsection*{Acknowledgements}
The authors would like to thank Thomas Mettler for pointing out the formal structure of $f_\lambda$ as a Cobb-Douglas function.

\begin{flushright}

\textit{Rivella AG,\\ Neue Industriestrasse 10,
CH-4852 Rothrist}\\ \url{silvan.brauen@rivella.ch}
\\[0.2cm]

\textit{Institute for Research on Management of Associations,\\
Foundations and Co-op\-eratives (VMI),\\ University of Fribourg,\\ Boulevard de P\'erolles 90, CH-1700 Fribourg} \\\url{philipp.erpf@unifr.ch}\\[0.2cm]

\textit{HEIA Fribourg,\\ HES-SO University of Applied Sciences and Arts Western Switzerland,\\ P\'erolles 80, CH-1700 Fribourg}\\ \url{micha.wasem@hefr.ch}

\end{flushright}
\end{document}